\documentclass[conference,letterpaper]{IEEEtran}

\IEEEoverridecommandlockouts
\usepackage[utf8]{inputenc} 
\usepackage[T1]{fontenc}
\usepackage{url}
\usepackage{ifthen}
\usepackage[cmex10]{amsmath} 
\usepackage{cite}
\usepackage{amsmath,amssymb,amsfonts}
\usepackage{graphicx}
\usepackage{textcomp}
\usepackage{xcolor}
\usepackage{booktabs}
\usepackage{array}
\usepackage{amsthm}
\usepackage[noend, ruled, linesnumbered]{algorithm2e} 
\usepackage{bm}
\usepackage{booktabs}
\usepackage{inconsolata}
\usepackage{setspace}
\usepackage[colorlinks,bookmarksopen,bookmarksnumbered,citecolor=blue, linkcolor=blue, urlcolor=blue]{hyperref}
\usepackage[caption=false,font=footnotesize]{subfig}
\usepackage{multirow}           
\usepackage{threeparttable}     
\usepackage{enumerate}
\usepackage{makecell}
\usepackage{setspace}
\usepackage{xcolor}
\usepackage{pagecolor}
\usepackage{amsmath}  

\definecolor{eye-relief}{RGB}{204,232,207}

\SetKwFor{For}{For}{do}{endfor}
\SetKwIF{If}{ElseIf}{Else}{If}{then}{else if}{Else}{endif}
\SetKwFor{While}{While}{do}{endwhile}
\SetKw{Return}{Return}

\def\BibTeX{{\rm B\kern-.05em{\sc i\kern-.025em b}\kern-.08em
		T\kern-.1667em\lower.7ex\hbox{E}\kern-.125emX}}

\newtheoremstyle{noparens} 
{2pt} 
{2pt} 
{\normalfont} 
{1em} 
{\bfseries\itshape} 
{.} 
{ } 
{\thmname{#1} \thmnumber{#2}\mdseries\thmnote{ (\hspace{-0.25pt}#3)}} 
\theoremstyle{noparens}

\newtheorem{theorem}{{\textbf{Theorem}}}
\newtheorem{lemma}[theorem]{{\textbf{Lemma}}}

\newtheorem{example}{\textbf{Example}}

\newtheorem{definition}{\textbf{Definition}}

\makeatletter

\newcommand{\Rmnum}[1]{\expandafter\@slowromancap\romannumeral #1@}
\makeatother

\newtheoremstyle{roman} 
{2pt} 
{2pt} 
{\normalfont} 
{1em} 
{\bfseries\itshape} 
{.} 
{ } 
{\thmname{#1} \Rmnum{#2}\mdseries\thmnote{ (\hspace{-0.25pt}#3)}} 
\theoremstyle{roman}

\makeatletter
\renewenvironment{proof}[1][\proofname]{\par
	\pushQED{\qed}%
	\normalfont \topsep2\p@\@plus2\p@\relax
	\trivlist
	\item[\hskip\labelsep
	\hskip\dimexpr1.5em\relax 
	\itshape
	#1\@addpunct{:}]\ignorespaces
}{%
	\popQED\endtrivlist\@endpefalse
}
\makeatother

\begin{document}


\title{GII-Polar Codes for Block Fading Channels}

\author{
	\IEEEauthorblockN{
		Fangbo Yi\IEEEauthorrefmark{2}\IEEEauthorrefmark{3},
		Zhongjun Yang\IEEEauthorrefmark{2},
		Xinzheng He\IEEEauthorrefmark{2}\IEEEauthorrefmark{3},
		Li Chen\IEEEauthorrefmark{2}\IEEEauthorrefmark{3},
		and
		Huazi Zhang\IEEEauthorrefmark{4}\\
		\IEEEauthorblockA{\IEEEauthorrefmark{2}
			School of Electronics and Information Technology, Sun Yat-sen University, Guangzhou 510006, P. R. China\\
		}
		\IEEEauthorblockA{\IEEEauthorrefmark{3}
			Guangdong Province Key Laboratory of Information Security Technology, Guangzhou 510006, P. R. China\\
		}
		\IEEEauthorblockA{\IEEEauthorrefmark{4}
			Hangzhou Research Center, Huawei Technologies Co., Ltd., Hangzhou 310052, P. R. China\\
		}
		\IEEEauthorblockA{Email: yifb@mail2.sysu.edu.cn, yangzhj59@mail2.sysu.edu.cn,  hexzh9@mail2.sysu.edu.cn, \\chenli55@mail.sysu.edu.cn, zhanghuazi@huawei.com
		}
}}

\maketitle

\begin{abstract}

	Polar codes are proven to be capacity-achieving codes, being gradually practiced in wireless communications.
	However, their successive cancellation (SC) and successive cancellation list (SCL) decoding incur latency challenge especially for long codes.
	This paper proposes the generalized integrated interleaved (GII)-polar codes for block Rayleigh fading channels, yielding both reduced decoding latency and competent decoding performance.
	Under the GII paradigm, two consecutive polar codewords of length $N$ are virtually coupled through a nested codebook which has a lower dimension, capable of correcting more errors.
	The component polar codes are known as interleaves of a GII-polar code.
	If decoding of an interleave fails, it can be projected into the decoding of the nested one,
	enabling richer error patterns to be corrected by each interleave.
	Since decoding of the interleaves can be performed in parallel,
	GII-polar codes yield a reduced decoding latency over a single polar code of length $2N$.
	Our simulation results validate both the decoding latency and performance merits of the proposed coding scheme under block Rayleigh fading channel.

\end{abstract}

\begin{IEEEkeywords}

	Generalized integrated interleaved (GII) structure, Rayleigh fading channel, polar codes.

\end{IEEEkeywords}

\section{Introduction}

\IEEEPARstart{P}{olar}
codes, invented by Ar{\i}kan \cite{Arikan2009-polar_codes}, are the first coding scheme mathematically proven to achieve the capacity of symmetric binary-input discrete memoryless channels (B-DMCs) by the use of successive cancellation (SC) decoding.
It was adopted by the 5G communication standard \cite{5G_Standard}.
However, SC decoding suffers from poor performance caused by partially polarized channels especially under short-to-moderate codeword length regimes.
To achieve a better performance, the SC list (SCL) decoding \cite{SCL-Vardy2015, SCL-Niu2012_SCL} and the cyclic redundancy check (CRC) aided SCL (CA-SCL) \cite{SCL-Niu2012-CA_SCL} decoding were proposed.
The above mentioned algorithms can significantly enhance the SC decoding performance.
However, it requires path expansion and sorting operations during the estimation of each information bit.
Therefore, it inevitably leads to the latency challenge, especially under larger codeword length regimes.
Other effective methods to improve the polar decoding performance contain the bit-flipping \cite{SCF-Afisiadis2014_orig, SCF-Chandesris2018-Dynamic_SCF, SCF-Ercan2019} and the soft information perturbation \cite{Yang2025-HPSC, Yang2026, SCL-Wang2023-Perturbation}, but at the cost of increased decoding complexity.

Recently, several researches have focused on the application of polar codes over the Rayleigh fading channel (RFC).
In \cite{Bravo-Santos2013}, by recursively computing the Bhattacharyya parameter,  a reliability evaluation of polarized subchannels was proposed.
In \cite{Trifonov2015-Rayleigh}, a low complexity construction method was introduced.
It functions by approximating and tracking the error probabilities of the polarized subchannels during the polarization.
Based on the independent and identically distributed over the RFC, an efficient design rule was proposed for both the polar codes and polar lattices \cite{Liu2016-Polar_Fading}.
Subsequently, a systematic polar‑spectrum‑based framework was proposed in \cite{Niu2020-Polar_Spectrum_Fading}, which is used for the design of polar codes over the fast RFC.
This framework was further extended to the block RFC (BRFC) \cite{Niu2021-Polar_Spectrum_Block_Fading}.
However, the decoding latency challenge remains for transmission over RFC.
Meanwhile, fading diversity has not been exploited through the possible redesign of codes.

Generalized integrated interleaved (GII) codes \cite{GII-Wu2017-org} are the promising codes that can yield both a low decoding latency and yet a good decoding performance.  
It consists of $v$ interleaves (also known as component codes) of length $N$.
Linear combinations of the $v$ interleaves can generate $z$ nested codewords with stronger error-correction capabilities, where $z\leq v$.
Decoding of all interleaves can be performed in parallel.
If there remain interleaves undecoded, the nested codes can be incurred to further recover them. 
Hence, GII codes can achieve both a high decoding throughput and an advanced decoding performance.
For the GII codes, hard-decision decoding was proposed in \cite{GII-Wu2017-org, GII-Zhang2021-Fast_Nested_Key_Equation_Solvers, GII-Zhang2020-Key_Equation_Solvers, GII-Zhang2019-hardware}, along with its performance analysis in \cite{GII-Wang2021-Performance_Analysis}.
In \cite{GII-Zhang2023-Soft_Decision, GII-He2025-Performance_Analysis, GII-He2025-Performance_Analysis_and_Enhanced_Chase}, soft-decision decoding of GII codes was proposed.
However, so far, only cyclic codes, including BCH codes and Reed-Solomon (RS) codes, have been considered as interleaves for constructing GII codes.
It remains to be studied how to construct a GII code using modern codes as interleaves and how they will perform.

In this paper, a novel GII-polar code is proposed.
It yields both reduced decoding latency and competent decoding performance.
Under the GII paradigm, two consecutive polar codewords of length $N$, which are also known as interleaves, are virtually coupled through a nested codebook.
The constructed nested code has a lower dimension, capable of correcting more errors.
If decoding of an interleave fails, it can be projected into the decoding of the nested code,
enabling richer error patterns to be corrected by each interleave.
Since decoding of all interleaves can be performed in parallel, a GII-polar code yields a reduced decoding latency over a single polar code of length $2N$.
Numerical results show that, over the BRFC, the proposed GII-polar codes outperform the conventional polar codes, producing both the decoding latency  and performance merits.

\section{Preliminaries}

This section provides the preliminaries of the paper, including polar codes, its SCL decoding and the GII codes.
In this paper, we use $\mathcal{B}[i]$ to denote the $i$-th element of an ordered set $\mathcal{B}$, where $0 \le i < |\mathcal{B}|$.
Let $a_{i,j}$ denote the vector $( {{a_i}, {a_{i+1}}, \ldots ,{a_{j}}} )$, and $ \oplus $ denote the bitwise XOR operation.
Further let $\mathbb{F}_q = \{0, \alpha^0, \alpha^1, \alpha^{q-2}\}$ denote the finite field of size $q$, where $\alpha$ is a primitive element.

\subsection{Polar Codes} \label{Sec-2-polar_codes}

Polar codes with length $N=2^n$ and dimension $K$, denoted as $\mathcal{P}(N,K)$, are linear block codes with a rate $R=K/N$.
After rate profiling, the $K$ most reliable subchannels, whose the index set is denoted as $\mathcal{A}$ and called the information set, are used to transmit information bits.
Let $\mathcal{A}^{c}$ denote the index set of the remaining $N-K$ least reliable subchannels.
Generally, it is known as the frozen set.
For polar coding, subchannels indexed by $\mathcal{A}^{c}$ are utilized to transmit the frozen bit.
It is often determined as zero and known by the polar decoder.
Note that $\mathcal{A} \cap {\mathcal{A}^c} = \emptyset$.

For polar encoding, the $K$-bit information vector, denoted as $m_{0,{K - 1}}=( {m_0}, {m_{1}},\ldots, $ ${m_{K-1}} ) \in {\mathbb{F}_2^K}$ with the $N-K$ frozen bits form an input vector $u_{0,{N - 1}}$.
Afterwards, the $N$-bit codeword $c_{0,{N - 1}}$ can be determined by
\begin{equation}	\label{eq2-encode-1}
	c_{0,{N - 1}} = u_{0,{N - 1}}{{\bf{G}}_{N}},
\end{equation}
where ${{\bf{G}}_{N}} = {\bf{F}}^{ \otimes n}$, $\bf{F}={[\begin{smallmatrix} 1 & 0 \\ 1 & 1 \end{smallmatrix}]}$ denotes the Ar{\i}kan kernel, and ${ \otimes n}$ is the $n$-th Kronecker product.
Let ${\bf{g}}_i$ denote row-$i$ of $\mathbf{G}_{{N}}$, where $i \in \mathcal{A} \cup {\mathcal{A}^c}$.
According to \cite{Arikan2009-polar_codes}, eq. (\ref{eq2-encode-1}) can be redefined as
\begin{equation}	\label{eq2-encode-2}
	\begin{aligned}
		c_{0,{N - 1}} &= \sum_{i \in \mathcal{A}} u_i \mathbf{g}_i \\
		&= m_{0,{K - 1}} \mathbf{G}_{\mathcal{A}},
	\end{aligned}
\end{equation}
where ${{\mathbf{G}}_\mathcal{A}} = {\left[ {{{\mathbf{g}}_{\mathcal{A}[0]}},{{\mathbf{g}}_{\mathcal{A}[1]}}, \cdots ,{{\mathbf{g}}_{\mathcal{A}[K - 1]}}} \right]^{\text{T}}}$.

In this paper, it is assumed that codeword is transmitted through the BRFC using binary phase shift keying (BPSK).
Let $y_i \in \mathbb{R}$ denote the received symbol corresponds to $c_i$.
It can be computed as
\begin{equation}
	y_i = h_i (1-2c_i) + w_i,
\end{equation}
where $ {h_i} \in \left[ {0, + \infty } \right) $ is the fading coefficient, and $w_i \sim \mathcal{N}(0, \sigma^2)$ denotes the channel noise with its power $\sigma^2$.
Let further define ${\cal L}\left( {{y_i}} \right)$ as the received LLR of $y_i$.
It can be determined as
 \begin{equation}	\label{eq-cho}
 	\mathcal{L}\left(y_{i}\right) = \ln \frac{{\Pr \left( {{y_i} \mid {c_i} = 0} \right)}}{{\Pr \left( {{y_i} \mid {c_i} = 1} \right)}},
 \end{equation}
where $\Pr \left( {{y_i} \mid {c_i}} \right)$ denotes the channel observation of $c_i$.
In the BRFC, it can be simplified into $\mathcal{L}\left(y_{i}\right) = ({2y_i {|h_i|}^2}) / {\sigma^2}$.


\subsection{SC and SCL Decoding} \label{Sec-2-SC}

The SC decoding can be viewed as a traversal process over the binary tree that represents the codebook.
The tree contains $n+1$ layers and $N$ leaves.
Let $\hat{u}_i$ denote the estimation of message bit $u_i$.
Note that, for frozen bits, they are fixed as zeros, i.e., $\hat{u}_i = 0$ for any $i \in \mathcal{A}^c$.
During the decoding process, the estimation of information bit $\hat{u}_i$ is made upon by its SC decoding \textit{a posteriori} LLR, denoted as $\mathcal{L}\left(u_{i}\right)$.
That says
\begin{equation} \label{eq2-decision}
	{{\hat u}_i} = \left\{ {\begin{array}{*{20}{l}}
			{1,} &  { \text{if} \ {\cal L}\left( {{u_i}} \right) \le 0 \ \text{and} \ i \in \mathcal{A}}; \\
			{0,} & {\text{otherwise}.}
	\end{array}} \right.
\end{equation}

Let us consider a polar code with $N=2$.
It has the simplest tree structure.
When obtaining the received symbol vector $y_0^1$ and further determining $\mathcal{L}\left(y_{0}\right)$ and $\mathcal{L}\left(y_{1}\right)$ as in (\ref{eq-cho}), the SC decoder performs the $f$-function to compute $\mathcal{L}\left(u_{0}\right)$ as
\begin{equation} \label{eq2-f}
	{\cal L}\left( {{u_0}} \right) = {\rm{sign}}\left( {{\cal L}\left( {{y_0}} \right) {\cal L}\left( {{y_1}} \right)} \right) \cdot \min \left( {\left| {{\cal L}\left( {{y_0}} \right)} \right|,\left| {{\cal L}\left( {{y_1}} \right)} \right|} \right).
\end{equation}
Afterwards, based on (\ref{eq2-decision}), $\hat{u}_0$ can be determined.
By utilizing $\hat{u}_0$, the decoder further invokes the $g$-function to estimate $u_1$.
It can be expressed as 
\begin{equation} \label{eq2-g}
	{\cal L}\left( {{u_1}} \right) = \left( {1 - 2{{\hat u}_0}} \right) {\cal L}\left( {{y_0}} \right) + {\cal L}\left( {{y_1}} \right).
\end{equation}

For polar codes with larger codeword lengths, e.g., $N \ge 2$, the SC decoding recursively performs the $f$ and $g$ functions for determining ${\cal L}\left( {{u_i}} \right)$.
E.g., when $N=32$, computing $\mathcal{L}\left(u_{1}\right)$ invokes the $f$-function four times with an additional $g$-function.
The SC decoding terminates once all information bits have been estimated.
In this case, all leaves are visited.
Moreover, when determining $\hat{u}_0^{N-1}$, the estimation of codeword, i.e., $\hat{c}_0^{N-1}$ can be further obtained as in (\ref{eq2-encode-1}).

The SCL decoding can be viewed as performing the $L$ independent SC decoders in parallel. %
It preserves multiple candidate paths, i.e., the estimations of input vector, within the decoding process in order to increase the possibility of successfully decoding.
When estimating the information bit, each candidate path is split according to the two possible bit estimations, yielding $2L$ decoding candidates.
E.g., when estimating $\hat u_i$, where $i \in \mathcal{A}$, the current decoding path $\hat u_0^{i-1}$ is split into the candidate paths $(\hat u_0^{i-1}, 0)$ and $(\hat u_0^{i-1}, 1)$.
Among these candidate paths, only the $L$ most reliable paths are retained.
For the remaining $L$ paths, they are pruned.
Once all information bits are decoded, the final estimation $\hat{u}_0^{N-1}$ is selected from the surviving paths with the largest reliability.

\subsection{GII Structure} \label{Sec-2-GII_codes}

In general, GII coding structure consists of $v$ interleaves and $z$ nested interleaves, each of length $N$ and $v \ge z$.
Interleave-$j$ is written as $c_{0,{N-1}}^{(j)}=( {c_{0}^{(j)}, {c_{1}^{(j)}}, \ldots ,{c_{N-1}^{(j)}}} )$, where $j \in \left\{ {0,1, \ldots ,v - 1} \right\}$.
Let $\mathbb{C}_{i}$ denote the codebook with dimension $K_i$, where $i \in \left\{ {0,1, \ldots ,z} \right\}$.
Note that interleaves exhibit the nested codebooks as ${\mathbb{C}_{z}} \subseteq {\mathbb{C}_{z - 1}} \subseteq \cdots \subseteq {\mathbb{C}_1} \subseteq {\mathbb{C}_0}$, and hence ${K_0} \ge {K_1} \ge  \cdots \ge {K_{z}}$.
Let
\begin{equation}\label{eq2-3-1}
	{\mathbf{B}} =
	\begin{bmatrix}
		b_{0,0} & b_{0,1} & \cdots & b_{0,v-1} \\
		b_{1,0} & b_{1,1} & \cdots & b_{1,v-1} \\
		\vdots & \vdots & \ddots & \vdots \\
		b_{z-1,0} & b_{z-1,1} & \cdots & b_{z-1,v-1}
	\end{bmatrix}
\end{equation}
denote a coefficient matrix and ${\mathbf{B}} \in \mathbb{F}_q^{z \times v}$.
Let $\tilde{c}_{{0},{N - 1}}^{(t)} \in \mathbb{C}_{z-t}$ denote the $t$-th nested interleave, where $t \in \left\{ {0,1, \ldots ,z - 1} \right\}$.
The GII code can be defined as\cite{GII-Wu2017-org}
\begin{equation}\label{eq2-3-2}
	\begin{aligned}
		&{\left[ \tilde c_{0,N - 1}^{(0)},\, \tilde c_{0,N - 1}^{(1)},\, \cdots,\, \tilde c_{0,N - 1}^{(z - 1)} \right]^{\text{T}}} \\
		& = \mathbf{B}
		{\left[ c_{0,N - 1}^{(0)},\, c_{0,N - 1}^{(1)},\, \cdots,\, c_{0,N - 1}^{(v - 1)} \right]^{\text{T}}}
	\end{aligned}
\end{equation}
where $c_{0,N - 1}^{(0)},c_{0,N - 1}^{(1)}, \cdots ,c_{0,N - 1}^{(v - 1)} \in {\mathbb{C}_0}$ are the $v$ interleave codewords.
Hence, the generation of a nested interleave can be expressed as
\begin{equation}\label{eq2-3-3}
	\tilde c_{0,N - 1}^{(t)} = \sum\limits_{j = 0}^{v - 1} {{b_{t,j}} c_{0,N - 1}^{(j)}},
\end{equation}
where $t \in \left\{ {0,1, \ldots ,z - 1} \right\}$.
The GII code has a length of $vN$ and a dimension of $K = (v - z)K_0 + K_1 + K_2 + \cdots + K_z$.
The following Theorem \ref{theorem-1} characterizes the decoding of GII codes.

\begin{theorem}[\cite{GII-He2026}] \label{theorem-1}
	In order to realize of the GII decoding, the first $\beta$ rows and any $\beta$ columns of coefficient matrix ${\mathbf{B}}$ can form a nonsingular submatrix, where $1 \le \beta \le z $.
\end{theorem}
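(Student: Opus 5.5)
The statement should be read as a necessary-and-sufficient condition: GII decoding succeeds, in the sense that any $\beta$ interleaves left undecoded after the first stage can be recovered using the first $\beta$ nested codebooks, exactly when the first $\beta$ rows and those $\beta$ columns of $\mathbf{B}$ form a nonsingular $\beta\times\beta$ submatrix. The plan is to reduce the decoding task to solving a small linear system over $\mathbb{F}_q$ for the unknown interleaves, and to show that this system has a unique solution precisely under the stated nonsingularity condition.

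\textbf{Setup.} Suppose that after parallel decoding of all $v$ interleaves in $\mathbb{C}_0$, the set $\mathcal{F}=\{j_1,\dots,j_\beta\}$ of interleaves remains undecoded, with $1\le\beta\le z$, and the others are correctly known. For each $t<\beta$, use (\ref{eq2-3-3}) to split the nested interleave as
\begin{equation*}
\tilde c^{(t)}_{0,N-1}=\sum_{j\in\mathcal{F}} b_{t,j}\,c^{(j)}_{0,N-1}+\sum_{j\notin\mathcal{F}} b_{t,j}\,c^{(j)}_{0,N-1}.
\end{equation*}
The second sum is known. Subtracting it from the same linear combination of the received words gives a noisy observation of $\sum_{j\in\mathcal{F}} b_{t,j}c^{(j)}$, which is a codeword of $\mathbb{C}_{z-t}$. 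This observation is decoded in the stronger nested code. Because the codebooks are nested, $\mathbb{C}_{z-t}\subseteq\mathbb{C}_{z-\beta+1}$ for all $t<\beta$, so the first $\beta$ rows give $\beta$ equations, each living in a code at least as strong as $\mathbb{C}_{z-\beta+1}$.

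\textbf{Sufficiency.} Stack these $\beta$ equations as $\mathbf{B}_{\beta,\mathcal{F}}\,[c^{(j_1)},\dots,c^{(j_\beta)}]^{\mathrm T}=[\,\cdot\,]$, a system applied coordinate-wise over $\mathbb{F}_q$. If $\mathbf{B}_{\beta,\mathcal{F}}$ is nonsingular, each unknown interleave is a unique $\mathbb{F}_q$-combination of the decoded nested words. This is the standard erasure-type recovery of GII codes \cite{GII-Wu2017-org}. It should be carried out inductively in the order the nested codes are decoded, peeling one interleave at a time. At step $\beta'$, the leading $\beta'\times\beta'$ minor for the currently failing columns must be invertible. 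That requirement is exactly the condition for every $\beta\le z$ and every choice of $\beta$ columns, since the set of failing columns is arbitrary.

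\textbf{Necessity.} If some $\mathbf{B}_{\beta,\mathcal{F}}$ is singular, take a nonzero vector $\mathbf{x}$ in its kernel. Perturbing the failed interleaves by $x_k\,\mathbf{e}$, where $\mathbf{e}\in\mathbb{C}_0$ is nonzero, leaves all $\beta$ nested syndromes unchanged. The decoder then cannot distinguish the two hypotheses, so unique recovery fails.

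\textbf{Expected obstacle.} The subtle step is the base field. Polar interleaves are binary while $\mathbf{B}$ lies over $\mathbb{F}_q$, so linearity of $\mathbb{C}_i$ over $\mathbb{F}_q$ is needed. I would handle this by treating codewords as $\mathbb{F}_q$-vectors via a subfield embedding, or by restricting to binary coefficients when $z=1$, $v=2$ as in the paper's construction. In that case the condition reduces to $b_{0,0},b_{0,1}\neq0$, i.e.\ both coefficients equal to $1$, which is easy to verify directly.
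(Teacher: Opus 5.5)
The paper does not prove this statement; it is imported from \cite{GII-He2026}. Your argument therefore has to stand against the standard GII decoding mechanism \cite{GII-Wu2017-org}, and as written it has a genuine gap.

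The gap starts in your setup: $\sum_{j\in\mathcal F}b_{t,j}c^{(j)}_{0,N-1}$ is \emph{not} a codeword of $\mathbb{C}_{z-t}$. Only the full combination $\tilde c^{(t)}_{0,N-1}$ is. The known part $\sum_{j\notin\mathcal F}b_{t,j}c^{(j)}_{0,N-1}$ lies merely in $\mathbb{C}_0$, so the failed part lies in a known coset of $\mathbb{C}_{z-t}$. The paper's own construction is a counterexample. For $v=2$, $\mathcal F=\{1\}$, $t=0$, your ``nested codeword'' is $c^{(1)}_{0,N-1}$ itself. By Theorem~\ref{theorem-2} it carries $m^{(0)}_{K_1,K_0-1}$ on the positions in $\mathcal S$, so it is not in $\mathbb{C}_1$. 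This is why the decoder here forms $\tilde y_i=(1-2\hat c_i^{(j)})y_i^{(j^*)}$, an observation of $\tilde c_{0,N-1}\in\mathbb{C}_1$; in effect it decodes the failed interleave inside the coset $\hat c^{(j)}_{0,N-1}\oplus\mathbb{C}_1$.

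The deeper problem is that your sufficiency argument applies the nonsingularity at the wrong step. You decode the $\beta$ combinations first and invert $\mathbf B_{\beta,\mathcal F}$ afterwards. For $\beta\ge2$, each combination is observed only through the superposed noise of all $\beta$ failed interleaves (error $\sum_{j\in\mathcal F}b_{t,j}e^{(j)}$ in the hard-decision picture), which the nested code is not designed to correct. So the ``decoded nested words'' you invert are not what a GII decoder produces, and the per-interleave gain never materializes.

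In GII decoding the inversion comes before decoding. Since $\tilde c^{(t)}_{0,N-1}\in\mathbb{C}_{z-t}\subseteq\mathbb{C}_{z-\beta+1}$ for $t<\beta$, the first $\beta$ rows give $\beta$ linear equations, with matrix $\mathbf B_{\beta,\mathcal F}$ and known right-hand side. The unknowns are the classes of the failed interleaves in $\mathbb{C}_0/\mathbb{C}_{z-\beta+1}$, i.e., their higher-order syndromes. Nonsingularity assigns each failed interleave its own coset of $\mathbb{C}_{z-\beta+1}$, and each is then decoded in that coset against its own received word alone. Every success lowers $\beta$, so fewer leading rows are needed and they reach a stronger code. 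That is the real source of the peeling and of the requirement for every $\beta\le z$. In your model a single inversion recovers all failures at once, so there is nothing to peel.

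Your necessity witness needs the matching repair. With an arbitrary nonzero $\mathbf e\in\mathbb{C}_0$, the nested words for $t\ge\beta$ shift by $\bigl(\sum_k b_{t,j_k}x_k\bigr)\mathbf e$ and may leave $\mathbb{C}_{z-t}$, so the rival tuple need not be a GII codeword. And if $\mathbf e\in\mathbb{C}_{z-\beta+1}$, both hypotheses sit in the same cosets, so the ambiguity is not one the nested structure is supposed to resolve. Instead take $\mathbf e\in\mathbb{C}_{z-\beta}\setminus\mathbb{C}_{z-\beta+1}$, assuming that nesting step is strict. Then both tuples are GII codewords agreeing on every correctly decoded interleave. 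Yet each failed interleave with $x_k\neq0$ moves to a different coset of $\mathbb{C}_{z-\beta+1}$, and that coset is exactly the information the second stage cannot recover.
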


It is worth mentioning, if the submatrix of ${\mathbf{B}}$ is singular, there exists at least one erroneously estimated interleave that can not be corrected through the nested structure.
That says it can not be recovered by $ \tilde c_{0,N - 1}^{(0)},\tilde c_{0,N - 1}^{(1)}, \cdots$, and $\tilde c_{0,N - 1}^{(z - 1)}$.

\section{Construction of GII-Polar Codes}\label{Sec-3}

This section introduces the construction of the proposed GII-polar codes and shows their equivalence to polar codes.


\subsection{Encoding of GII-Polar Codes}

The following Lemma \ref{lemma-1} founds our proposed GII-polar code construction, explaining how can a GII code be constructed by two polar interleaves.

\begin{lemma}\label{lemma-1}

	Consider a binary GII code that is constructed by $v$ interleaves and inherited with $z$ nested interleaves.
	When $z \ge 2$, it holds that $v = 2$.

\end{lemma}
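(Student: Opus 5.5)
Binary means $\mathbf{B}\in\mathbb{F}_2^{z\times v}$, so every entry is $0$ or $1$. The plan is to combine this with Theorem~\ref{theorem-1} to pin down $\mathbf{B}$ and hence $v$.

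The first step applies Theorem~\ref{theorem-1} with $\beta=1$. The first row together with any single column must form a nonsingular $1\times1$ matrix, so $b_{0,j}\neq0$ for every $j$. Over $\mathbb{F}_2$ this forces $b_{0,j}=1$ for all $j$, i.e. the first row of $\mathbf{B}$ is the all-ones vector.

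The second step applies Theorem~\ref{theorem-1} with $\beta=2$, which is available because $z\ge 2$. For any two distinct columns $j\neq k$, the submatrix
\begin{equation*}
\begin{bmatrix} 1 & 1\\ b_{1,j} & b_{1,k}\end{bmatrix}
\end{equation*}
must be nonsingular. Its determinant is $b_{1,k}-b_{1,j}$, so we need $b_{1,j}\neq b_{1,k}$ for all $j\neq k$. The second row therefore has pairwise distinct entries in $\mathbb{F}_2$, and since $|\mathbb{F}_2|=2$ this gives $v\le 2$.

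It remains to rule out $v<2$. The structure requires $v\ge z\ge 2$, so $v=2$. The only delicate point is this last inequality: it must be read from the standing assumption $v\ge z$ in the GII definition of Section~\ref{Sec-2-GII_codes}, since the nonsingularity condition alone would be vacuous for $v=1$. The result also shows that $z=2$, with $\mathbf{B}$ equal to $\left[\begin{smallmatrix}1&1\\0&1\end{smallmatrix}\right]$ up to column order.
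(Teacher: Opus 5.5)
Your proof is correct and follows essentially the same route as the paper: $\beta=1$ forces a nonzero (over $\mathbb{F}_2$, all-ones) first row, $\beta=2$ forces the columns to be pairwise linearly independent, and $v\ge z\ge 2$ closes the argument. The only difference is cosmetic: the paper argues for general $q$ that every column is a multiple of one of the $q$ representatives $[1,0]^{\text{T}},[1,\alpha^0]^{\text{T}},\ldots,[1,\alpha^{q-2}]^{\text{T}}$, giving $v\le q$ before setting $q=2$, whereas you compute the $2\times 2$ determinant directly over $\mathbb{F}_2$.
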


\begin{proof}

	Based on Theorem \ref{theorem-1}, when $\beta=1$, dimension of the submatrices is $1 \times 1$, ${\mathbf{B}}$ dissolves into a finite field element ${\mathbf{B}} \in \mathbb{F}_q$.
	To ensure the nonsingularity mentioned in Theorem \ref{theorem-1}, all elements in the first row are required to be nonzero.

	When $\beta=2$, dimension of the submatrices is $2 \times 2$.
	Given a set of vectors
	\begin{equation}	\label{eq3-1-lemma-proof-1}
		\{ {{{\boldsymbol{\alpha }}_0} = {{\left[ {1,0} \right]}^{\text{T}}},{{\boldsymbol{\alpha }}_1} = {{\left[ {1,{\alpha ^0}} \right]}^{\text{T}}}, \cdots ,{{\boldsymbol{\alpha }}_{q - 1}} = {{\left[ {1,{\alpha ^{q - 2}}} \right]}^{\text{T}}}} \},
	\end{equation}
	any nonsingular $2 \times 2$ matrix with nonzero elements in its first row will have column vectors that can be expressed as $\gamma {{\boldsymbol{\alpha }}_i}$, where $\gamma  \in {\mathbb{F}_q}$ and $i=0,1,\ldots,q-1$.
	Therefore, ${\mathbf{B}}$ has at most $q$ pairwise linearly independent columns.
	That says $v \le q$.

	Note that, for binary codes, $q=2$.
	In this case, given $z \ge 2$, it follows that $\beta \in \{1, 2\}$ and $v \le 2$.
	Since $v \ge z$, it holds that $z = v = 2$.
\end{proof}

Lemma \ref{lemma-1} implies that, when $z \ge 2$, a binary GII code can comprise at most two interleaves.
Therefore, despite the GII structure can accommodate an arbitrary number of $v$ interleaves, a binary GII-polar code can accommodate at most two interleaves. The proposed GII-polar codes follow this construction limit.

Let $\mathcal{P}(N,K_0)$ and $\mathcal{P}(N,K_1)$ denote two interleaves, respectively, where $K_0 > K_1$.
Again, let ${\mathbb{C}_0} $ and ${\mathbb{C}_1} $ denote their codebook, respectively, and ${\mathbb{C}_1} \subseteq {\mathbb{C}_0}$.
Let $\mathcal{A}_j$ denote the information set of the $\mathcal{P}(N,K_j)$ code, where $j \in \left\{ {0,1} \right\} $.
For a GII-polar code, it holds that ${\mathcal{A}_1} \subseteq {\mathcal{A}_0}$.
For the $\mathcal{P}(N,K_j)$ code, let $m_{{0},{{K_j} - 1}} ^{(j)} $ and $c_{{0},{N - 1}} ^{(j)}$ denote an information vector and its codeword generated.
As mentioned in Sec. \ref{Sec-2-GII_codes}, for the GII codes, all interleaves are included in $\mathbb{C}_0$.
That says
\begin{subequations}\label{eq3-1-1}
	\begin{align}
		c_{{0},{N - 1}} ^{(0)} &= m_{{0},{{K_0} - 1}} ^{(0)} \mathbf{G}_{\mathcal{A}_0}, \label{eq3-1-1a} \\
		c_{{0},{N - 1}} ^{(1)} &= \left( m_{{0},{{K_1} - 1}} ^{(1)}, r_{{0},{{K_0}-K_1 - 1}} \right) \mathbf{G}_{\mathcal{A}_0}. \label{eq3-1-1b}
	\end{align}
\end{subequations}
where $r_{{0},{{K_0}-K_1 - 1}} \in {\mathbb{F}_2^{{K_0} - {K_1}}}$ denotes the redundant vector for encoding $c_{{0},{N - 1}} ^{(1)}$.
It is designed for maintaining $c_{{0},{N - 1}} ^{(1)} \in \mathbb{C}_0$.
Based on (\ref{eq2-3-2}), the nested interleaves of a GII-polar code can be defined as
\begin{subequations}	\label{eq3-1-2}
	\begin{align}
		\tilde{c}_{0,{N - 1}}^{(0)} &= c_{{0},{N - 1}}^{(0)} \oplus c_{{0},{N - 1}}^{(1)}, \label{eq3-1-2a} \\
		\tilde{c}_{0,{N - 1}}^{(1)} &= c_{{0},{N - 1}}^{(1)},\label{eq3-1-2b}
	\end{align}
\end{subequations}
where $\tilde{c}_{0,{N - 1}}^{(0)}, \tilde{c}_{0,{N - 1}}^{(1)} \in \mathbb{C}_1$.
Hence, if either the decoding of $c_{{0},{N - 1}} ^{(0)}$ and $c_{{0},{N - 1}} ^{(1)}$ fails, decoding of $\tilde{c}_{0,{N - 1}}^{(0)}$ can be incurred. Since $\tilde{c}_{0,{N - 1}}^{(0)}\in \mathbb{C}_1$, it has a stronger error-correction capability than the interleaves.
Note that $\tilde{c}_{0,{N - 1}}^{(1)}$ is the copy of ${c}_{0,{N - 1}}^{(1)}$ without any additional error-correction capability.
For simplicity, in this work, we use ${c}_{0,{N - 1}}^{(1)}$ to denote $\tilde{c}_{0,{N - 1}}^{(1)}$ and let $\tilde{c}_{0,{N - 1}}=\tilde{c}_{0,{N - 1}}^{(0)}$.
Based on (\ref{eq2-3-2}), (\ref{eq3-1-1}), and (\ref{eq3-1-2}), the proposed GII-polar codes can be defined as the follows.

\begin{figure}[t]
	\centering
	\includegraphics[scale=0.5]{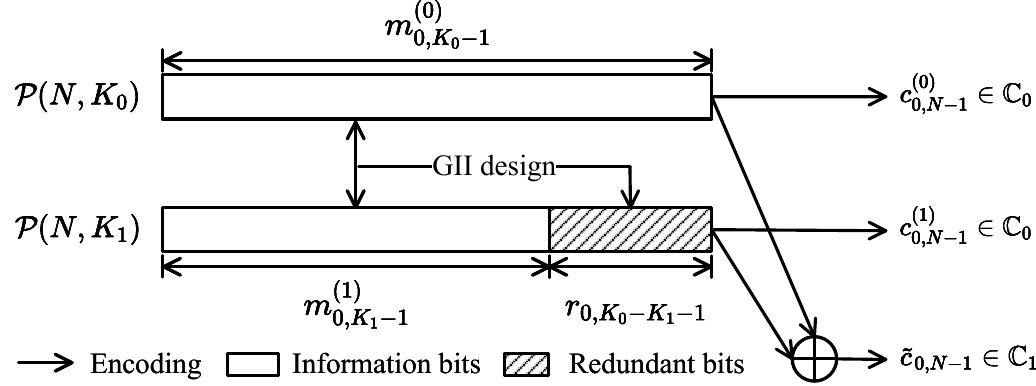}
	\caption{
		The encoding block diagram of a ${\mathcal G}\left( {N,{K_0},{K_1}} \right)$ GII-polar code.
	}
	\label{Fig.3-1}
\end{figure}

\begin{definition}\label{def-1}

	Consider a GII-polar code with length $2N$ and dimension $K_0 + K_1$, denoted as ${\mathcal G}\left( {N,{K_0},{K_1}} \right)$, its rate is $ R = \left( {{K_0} + {K_1}} \right)/2N$.
	Its codebook can be defined as
	\begin{equation}	\label{eq3-def-1}
		\begin{aligned}
			{\mathbb{C}_{{\text{GP}}}} \triangleq \{ & {c_{0,2N - 1}} = (c_{0,N - 1}^{(0)},c_{0,N - 1}^{(1)}) \mid \\
				& \ c_{0,N - 1}^{(j)} \in {\mathbb{C}_0},{\text{ }}j = \{ 0,1\} ;{{\tilde c}_{0,N - 1}} \in {\mathbb{C}_1} \},
		\end{aligned}
	\end{equation}
	where $\tilde{c}_{0,{N - 1}}$ is generated as in (\ref{eq3-1-2a}).

\end{definition}

The following Theorem \ref{theorem-2} further defines the redundant vector $r_{{0},{{K_0}-K_1 - 1}}$.

\begin{theorem}\label{theorem-2}

	Given a GII-polar code ${\mathcal G}\left( {N,{K_0},{K_1}} \right)$ with information vectors $ m_{{0},{{K_0} - 1}}^{(0)}$ and $ m_{{0},{{K_1} - 1}}^{(1)}$, respectively, there exists $r_{{0},{{K_0}-K_1 - 1}} = m_{{{K_1}},{{K_0} - 1}}^{(0)}$.

\end{theorem}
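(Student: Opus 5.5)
We have to show that with $r_{0,K_0-K_1-1}=m^{(0)}_{K_1,K_0-1}$ the nested codeword $\tilde c_{0,N-1}=c^{(0)}_{0,N-1}\oplus c^{(1)}_{0,N-1}$ lies in $\mathbb{C}_1$, as Definition \ref{def-1} requires. We also want to see that this choice is essentially forced. The approach is purely linear-algebraic and uses the row decomposition (\ref{eq2-encode-2}).

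\textbf{Step 1: fix the ordering of rows.} Order the information set $\mathcal{A}_0$ so that its first $K_1$ entries are exactly $\mathcal{A}_1$. This is possible because $\mathcal{A}_1\subseteq\mathcal{A}_0$. Let $\mathcal{D}=\mathcal{A}_0\setminus\mathcal{A}_1$, so $|\mathcal{D}|=K_0-K_1$. Under this ordering,
\begin{equation*}
\mathbf{G}_{\mathcal{A}_0}=\begin{bmatrix}\mathbf{G}_{\mathcal{A}_1}\\ \mathbf{G}_{\mathcal{D}}\end{bmatrix}.
\end{equation*}
Split $m^{(0)}_{0,K_0-1}$ into $m^{(0)}_{0,K_1-1}$ and $m^{(0)}_{K_1,K_0-1}$. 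Then (\ref{eq3-1-1a}) becomes
\begin{equation*}
c^{(0)}_{0,N-1}=m^{(0)}_{0,K_1-1}\mathbf{G}_{\mathcal{A}_1}\oplus m^{(0)}_{K_1,K_0-1}\mathbf{G}_{\mathcal{D}}.
\end{equation*}
Likewise (\ref{eq3-1-1b}) becomes
\begin{equation*}
c^{(1)}_{0,N-1}=m^{(1)}_{0,K_1-1}\mathbf{G}_{\mathcal{A}_1}\oplus r_{0,K_0-K_1-1}\mathbf{G}_{\mathcal{D}}.
\end{equation*}

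\textbf{Step 2: add the two interleaves.} XORing the two expressions gives
\begin{equation*}
\tilde c_{0,N-1}=\big(m^{(0)}_{0,K_1-1}\oplus m^{(1)}_{0,K_1-1}\big)\mathbf{G}_{\mathcal{A}_1}\oplus\big(m^{(0)}_{K_1,K_0-1}\oplus r_{0,K_0-K_1-1}\big)\mathbf{G}_{\mathcal{D}}.
\end{equation*}
The first term lies in $\mathbb{C}_1$ by definition, since $\mathbb{C}_1$ is the row space of $\mathbf{G}_{\mathcal{A}_1}$. Because $\mathbb{C}_1$ is linear, $\tilde c_{0,N-1}\in\mathbb{C}_1$ holds if and only if the second term lies in $\mathbb{C}_1$.

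\textbf{Step 3: conclude, and this is the main point needing care.} The rows $\{\mathbf{g}_i\}_{i\in\mathcal{A}_0}$ are distinct rows of the invertible matrix $\mathbf{G}_N=\mathbf{F}^{\otimes n}$, so they are linearly independent. Hence the span of $\mathbf{G}_{\mathcal{D}}$ meets the span of $\mathbf{G}_{\mathcal{A}_1}$ only in $\mathbf{0}$. Therefore the second term lies in $\mathbb{C}_1$ if and only if $m^{(0)}_{K_1,K_0-1}\oplus r_{0,K_0-K_1-1}=\mathbf{0}$. This gives the unique choice $r_{0,K_0-K_1-1}=m^{(0)}_{K_1,K_0-1}$. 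With this choice, $\tilde c_{0,N-1}=(m^{(0)}_{0,K_1-1}\oplus m^{(1)}_{0,K_1-1})\mathbf{G}_{\mathcal{A}_1}\in\mathbb{C}_1$. The resulting pair $(c^{(0)}_{0,N-1},c^{(1)}_{0,N-1})$ therefore belongs to $\mathbb{C}_{\rm GP}$ and carries $K_0+K_1$ free information bits, matching the dimension in Definition \ref{def-1}. The only subtle point is the ordering convention in Step 1. If $\mathcal{A}_0$ is instead kept in its natural index order, the same argument goes through with $r$ and $m^{(0)}_{K_1,K_0-1}$ understood as the coordinates indexed by $\mathcal{D}$.
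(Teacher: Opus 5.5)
Your proof is correct and follows essentially the same route as the paper: the same split $\mathbf{G}_{\mathcal{A}_0}=\bigl[\mathbf{G}_{\mathcal{A}_1};\mathbf{G}_{\mathcal{S}}\bigr]$ (with $\mathcal{A}_1$ ordered first), the same XOR of the two interleaves, and the same conclusion that the coefficient of $\mathbf{G}_{\mathcal{S}}$ must vanish. Your Step 3 uses the invertibility of $\mathbf{G}_N$ to make explicit the linear-independence fact that the paper uses implicitly when it says that $u_{\mathcal{A}_1^c}=\mathbf{0}$ forces $u_{\mathcal{S}}=\mathbf{0}$, and this also gives you both directions (necessity and sufficiency) cleanly.
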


\begin{proof}
	Let $\mathcal{S}={\mathcal{A}_0}{\backslash }{\mathcal{A}_1}$, and $|\mathcal{S}|=K_0-K_1$.
	Generator matrix $\mathbf{G}_{\mathcal{A}_0}$ can be redefined as
	\begin{equation}\label{proof1-1}
		{{\mathbf{G}}_{{\mathcal{A}_0}}} = \left[ {\begin{array}{*{20}{c}}
				{{{\mathbf{G}}_{{\mathcal{A}_1}}}} \\
				{{{\mathbf{G}}_\mathcal{S}}}
		\end{array}} \right],
	\end{equation}
	where ${{\mathbf{G}}_\mathcal{S}} = {\left[ {{{\mathbf{g}}_{\mathcal{S}[0]}},{{\mathbf{g}}_{\mathcal{S}[1]}}, \cdots ,{{\mathbf{g}}_{\mathcal{S}[K_0 - K_1 - 1]}}} \right]^{\text{T}}}$.

	Based on (\ref{eq3-1-1}) - (\ref{proof1-1}), since $m_{0,{K_0-1}}^{(0)} = ( m_{0,{K_1-1}}^{(0)}, m_{K_1,{K_0-1}}^{(0)} )$,
	\begin{equation}\label{proof1-2}
		\begin{aligned}
			\tilde{c}_{0,{N - 1}}
			= & m_{0,{K_0-1}}^{(0)} \mathbf{G}_{\mathcal{A}_0}
			\oplus \left( m_{0,{K_1-1}}^{(1)}, r_{0,{K_0-K_1-1}} \right) \mathbf{G}_{\mathcal{A}_0} \\
			= & \left( m_{0,{K_1-1}}^{(0)} \oplus m_{0,{K_1-1}}^{(1)} \right) \mathbf{G}_{\mathcal{A}_1} \\
			  & \ \oplus \left( m_{K_1,{K_0-1}}^{(0)} \oplus r_{0,{K_0-K_1-1}} \right) \mathbf{G}_{\mathcal{S}}.
		\end{aligned}
	\end{equation}
	Note that $( m_{0,{K_1-1}}^{(0)} \oplus m_{0,{K_1-1}}^{(1)} ){{\mathbf{G}}_{{\mathcal{A}_1}}} \in {\mathbb{C}_1}$.
	Let $\mathbf{0}_l$ denote the all-zero vector with length-$l$.
	Given $\tilde{c}_{0,{N - 1}} \in {\mathbb{C}_1}$, $\mathcal{A}_1^c$ as the frozen set of $\mathcal{P}(N,K_1)$ and $u_{\mathcal{A}_1^c}={\mathbf{0}_{{N} - {K_1}}}$, the above equation implies that $( m_{K_1,{K_0-1}}^{(0)} \oplus r_{0,{K_0-K_1-1}} ){{\mathbf{G}}_\mathcal{S}} \triangleq {{\mathbf{0}}_N}$, i.e., $u_{\mathcal{S}} = {\mathbf{0}_{{K_0} - {K_1}}}$.
	That says $ m_{K_1,{K_0-1}}^{(0)} \oplus r_{0,{K_0-K_1-1}}  \triangleq {\mathbf{0}_{{K_0} - {K_1}}}$ and $r_{0,{K_0-K_1-1}} = m_{K_1,{K_0-1}}^{(0)}$.
\end{proof}

It can be seen that for the $\mathcal{P}(N,K_0)$ code, the redundant bits of $r_{0,{K_0-K_1-1}}$ are transmitted through the least reliable subchannels in ${\mathcal{A}_0}$, indexed by the set $\mathcal{S}$.
Fig. \ref{Fig.3-1} shows the encoding block diagram of a $\mathcal{G}(N, K_0, K_1)$ GII-polar code. The following Example \ref{eg-1} further illustrates the encoding process.

\begin{example}\label{eg-1}

	Consider a ${\mathcal G}\left( {8,{6},4} \right)$ GII-polar code with information sets ${\mathcal{A}_0} = \left\{ {7,6,5,3,4,2} \right\}$ and ${\mathcal{A}_1} = \left\{ {7,6,5,3} \right\}$.
	The information vectors are $m_{0,5}^{(0)}$ and $m_{0,3}^{(1)}$, respectively.
	Based on Theorem \ref{theorem-2}, the redundancy vector satisfies $r_{0,1} = ( m_{4}^{(0)}, m_{5}^{(0)} )$.
	After rate profiling, the input vectors are $u_{0,7}^{(0)} = (0,0, m_{5}^{(0)}, m_{3}^{(0)}, m_{4}^{(0)}, m_{2}^{(0)}, m_{1}^{(0)}, m_{0}^{(0)} )$ and $u_{0,7}^{(1)} = ( 0,0, m_{5}^{(0)}, m_{3}^{(1)}, m_{4}^{(0)}, m_{2}^{(1)}, m_{1}^{(1)}, m_{0}^{(1)} )$, respectively.
	Since $ {u_{i}^{(0)}} \oplus {u_{i}^{(1)}} = 0$ holds for $\forall i \in {\mathcal{A}_1^c}$,
	we have $(u_{0,7}^{(0)} \oplus u_{0,7}^{(1)}) {{\mathbf{G}}_8} \in {\mathbb{C}_1}$, i.e., ${c_{0,7}^{(0)} \oplus c_{0,7}^{(1)} \in {\mathbb{C}_1}}$.

\end{example}

\subsection{Equivalence to a Polar Code of Length $2N$}

Based on (\ref{eq2-3-1}) and (\ref{eq3-1-lemma-proof-1}), for the GII codes defined over $\mathbb{F}_2$, it follows that ${\bf{B}} = {[\begin{smallmatrix} 1 & 1 \\ 0 & 1 \end{smallmatrix}]}$.
Note that ${\bf{B}}^{\text{T}} = {\bf{F}}$. 
Since ${c_{0,N - 1}^{(0)}}$ and ${c_{0,N - 1}^{(1)}}$ are polar codewords with length $N$, a GII-polar code of length $2N$ can also be equivalently seen as a polar code of the same length.
Given a ${\mathcal G}\left( {N,{K_0},{K_1}} \right)$ GII-polar code, it can be interpreted as a $\mathcal{P}(2N, K_0 + K_1)$ polar code.
Based on (\ref{eq2-3-2}), codeword of the $\mathcal{P}(2N, K_0 + K_1)$ code can be expressed as $(c_{{0},{N - 1}}^{(1)} \oplus \tilde{c}_{0,{N - 1}}, c_{{0},{N - 1}}^{(0)})$, 
with information set defined as
\begin{equation}\label{eq3-1-4}
	\mathcal{A} = {\mathcal{A}_1} \cup \left\{ {{\mathcal{A}_0}\left[ 0 \right] + N,{\mathcal{A}_0}\left[ 1 \right] + N, \ldots ,{\mathcal{A}_0}\left[ {{K_0} - 1} \right] + N} \right\},
\end{equation}
where ${\mathcal{A}_0}$ and ${\mathcal{A}_1}$ are the information sets of the two interleaves within the ${\mathcal G}\left( {N,{K_0},{K_1}} \right)$ GII-polar code.
It is noteworthy that, compare with the conventional polar codes, the proposed GII-polar codes enable a more flexible decoding behavior.
The two interleaves can be decoded in parallel.
If the second decoding stage is needed, the successfully decoded interleave can provide the additional \textit{a priori} information for decoding the nested codeword.
This yields a reduced decoding latency over a single polar code of the same length.
The next section shows decoding of the GII-polar codes.

\section{Decoding of GII-Polar Codes} \label{Sec3-2-GII_dec}

For a ${\mathcal G}\left( {N,{K_0},{K_1}} \right)$ GII-polar code, its decoding can be divided into two stages.
Let $y_{0,{N-1}}^{(j)}$ denote the received symbol vector of the transmitted codeword $c_{0,{N-1}}^{(j)} \in \mathbb{C}_0$, where $j \in \left\{ {0,1} \right\}$.
In this paper, CRC codes are employed to identify the correctly estimated interleaves.
Let $CRC(\cdot)$ denote the CRC function, where an output of $true$ signifies a successful decoding.
Otherwise, if $CRC(\cdot)$ yields $false$, it implies a decoding failure.

In the first decoding stage, all interleaves can be decoded in parallel.
At this stage, the polar decoder is performed based on $y_{0,{N-1}}^{(j)}$ and $\mathcal{A}_0$.
The estimation $\hat{u}_{0,{N-1}}^{(j)}$ will be validated by a CRC code.
If $CRC(\hat{u}_{0,{N-1}}^{(j)})$ yields $true$ for $\forall j \in \left\{ {0,1} \right\}$, the decoding will terminate.
Note that if all interleaves are erroneously estimated, i.e., $CRC(\hat{u}_{0,{N-1}}^{(j)})$ yields $false$ for $\forall j \in \left\{ {0,1} \right\}$, the decoding will also terminate but fail.
Otherwise, the nested structure needs to be applied in decoding the nested interleave $\tilde{c}_{0,{N - 1}}$. 
Based on (\ref{eq3-1-2}), if it succeeds, the underlying interleave (in the first stage) can be further recovered.

The second decoding stage is invoked when there is only one interleave being successfully decoded.
That says $CRC(\hat{u}_{0,{N-1}}^{(j)})$ yields $true$ and $CRC(\hat{u}_{0,{N-1}}^{(j^*)})$ yields $false$, where ${j^*} = \left\{ {0,1} \right\} \backslash {j}$.
In this case, it can be assumed that $\hat{u}_{0,{N-1}}^{(j)} = u_{0,{N-1}}^{(j)}$ and $\hat{c}_{0,{N-1}}^{(j)} = c_{0,{N-1}}^{(j)}$.
Let $\tilde{y}_{0,{N - 1}}$ denote the virtual received symbols of the nested interleave $\tilde{c}_{0,{N - 1}}$.
With BPSK, the entry of ${{\tilde y}_i}$ can be computed as
\begin{equation}	\label{eq3-2-1}
	{{\tilde y}_i} = (1-2\hat{c}_i^{(j)}) y_i^{({j^*})}.
\end{equation}
Once obtaining $\tilde{y}_{0,{N - 1}}$, the nested interleave $\tilde{c}_{0,{N - 1}}$ can be further decoded based on $\mathcal{A}_1$.
Meanwhile, the nested input vector $\tilde{u}_{0,{N - 1}}$, i.e, $\hat{\tilde{u}}_{0,{N - 1}}$ can be obtained.
Based on (\ref{eq2-encode-1}) and (\ref{eq3-1-2a}), estimations of interleave ${c}_{0,{N-1}}^{(j^*)}$ and its input vector ${u}_{0,{N-1}}^{(j^*)}$, i.e., $\hat{c}_{0,{N-1}}^{(j^*)}$ and $\hat{u}_{0,{N-1}}^{(j^*)}$, can be obtained as
\begin{equation}\label{eq-theorem-u}
	\hat{u}_{0,{N-1}}^{(j^{*})} = \hat{\tilde{u}}_{0,{N-1}} \oplus \hat{u}_{0,{N-1}}^{(j)}.
\end{equation}
After decoding the nested interleave ${\tilde{c}}_{0,{N-1}}$, a CRC validation is performed.
If $CRC(\hat{u}_{0,{N-1}}^{(j^{*})})$ yields $true$, the current \(\hat{u}_{0,{N-1}}^{(0)}\) and \(\hat{u}_{0,{N-1}}^{(1)}\), i.e., the estimations of $m_{0,{K_0-1}}^{(0)}$ and $m_{0,{K_1-1}}^{(1)}$ will be output.
Otherwise, the decoding terminates with a failure.
In particular, it should be mentioned that the polar decoder can be realized by any existing decoding algorithm, e.g. the SCL decoding.

\section{Decoding Latency and Complexity}

In this section, both the decoding latency and complexity of GII-polar codes are analyzed.

\subsection{Decoding Latency}

The decoding latency is measured as the average number of required clock cycles (CCs) in decoding one GII-polar codeword.
Let ${{ \tau }_{{\text{PD}}}}(K)$ denote the decoding latency of the underlying polar decoder in decoding the $\mathcal{P}(N,K)$ code.
For the GII-polar decoding, as mentioned in Sec. \ref{Sec3-2-GII_dec}, two CCs are required for performing the CRC validation and one CC is required for computing the $\tilde{y}_{0,{N - 1}}$ of (\ref{eq3-2-1}).
Moreover, for the process of estimating $\hat{u}_{0,{N-1}}^{(j^{*})}$, as in (\ref{eq-theorem-u}), an extra CC is needed.
Let $T_\text{avg}$ denote the average number of stages for decoding GII-polar code, where $T_\text{avg} \in [1, 2] $.
Since all interleaves can be decoded in parallel during the first decoding stage \cite{GII-Wu2017-org}, the average latency of the GII-polar decoding can be characterized as
\begin{equation}\label{eq4-latency}
	{{ \tau }_{{\text{GPD}}}} \triangleq {{ \tau }_{{\text{PD}}}}(K_0) + ({T_\text{avg}-1}) {{ \tau }_{{\text{PD}}}}(K_1) + 4.
\end{equation}
It is noteworthy that
${T_\text{avg}}$ of $1$ and $2$ is corresponding to the best-case and worst-case decoding, respectively.

\subsection{Decoding Complexity}

Complexity of the proposed decoding attributes to the complexity of the underlying polar decoder in decoding the $\mathcal{P}(N,K)$ code, denoted as ${\Phi _{{\text{PD}}}}$ and that of computing $\tilde{y}_{0,{N - 1}}$.
Therefore, the average complexity of GII-polar decoding can be defined as
\begin{equation}\label{eq4-comp}
	{\Phi _{{\text{GPD}}}} = \left( {{T_{{\text{avg}}}} + 1} \right){\Phi _{{\text{PD}}}} + N,
\end{equation}
where $N$ denotes the complexity of computing the virtual $\tilde{y}_{0,{N - 1}}$, as in (\ref{eq3-2-1}). 
Similar to (\ref{eq4-latency}), the worst-case complexity for the GII-polar decoding can be characterized as $ 3{\Phi _{{\text{PD}}}} + N$.

\section{Simulation Results}

This section presents simulation results of the proposed GII-polar coding scheme over the BRFC.
Polar codes with $N \in \left\{ {1024,2048} \right\}$ and $R \in \left\{ {0.211, 0.25, 0.375, 0.5, 0.539} \right\}$ are considered.
The information sets are obtained through Gaussian approximation (GA) \cite{GA-Peter2012} at the \textit{design}-SNR of 2.5 dB.
The length-24 CRC code of 5G new radio \cite{5G_Standard} is employed for the decoding, whose generator
polynomial is $X^{24} + X^{23} + X^6 + X^5 + X + 1$.
In this paper, SCL decoding is considered for the GII-polar codes.
Based on \cite{SCL-Niu2012-CA_SCL} and \cite{SCL-Stimming2015-hardware_LLR_based}, ${\Phi _{{\text{PD}}}} = \mathcal{O}(LN \log_2 N)$  and ${{\tau }_{{\text{PD}}}}(K) = 2N-2 +K$.
Note that for a conventional polar code, the CA-SCL \cite{SCL-Niu2012-CA_SCL} decoding is also provided for comparison with the list size of $L \in \{2, 8\}$.
For the BRFC, it is set that
\begin{equation}
	h_i =
	\begin{cases}
		h_0, & 0 \le i < 1024;\\
		h_{1024}, & 1024 \le i < 2048,
	\end{cases}
\end{equation}
where ${h_0},{h_{1024}} \in \left[ {0, + \infty } \right)$.

\subsection{Construction and Decoding Performance}

The GII-polar code is heuristically constructed as follows.
Let $P_\text{e}(K)$ denote the SCL decoding performance of a $\mathcal{P}(N,K)$ polar code,  simulated by \cite{SCL-Niu2012-CA_SCL}.
Given a targeted rate $R$, all possible combinations of $K_0$ and $K_1$ that satisfy $(K_0 + K_1)/2N = R$ are enumerated.
Based on \cite{GII-He2025-Performance_Analysis_and_Enhanced_Chase}, SCL decoding performance of a ${\mathcal G}\left( {N, K_0, K_1} \right)$ code is
\begin{equation}\label{eq6-1}
	P_{\text{e}}^{{\text{GPD}}} = {P_{\text{e}}}{\left( {{K_0}} \right)^2} + 2{P_{\text{e}}}\left( {{K_1}} \right)\left( {1 - {P_{\text{e}}}\left( {{K_0}} \right)} \right).
\end{equation}
Therefore, the adequate choices of $K_0$ and $K_1$ can be heuristically determined.
The $\mathcal{G}(N, K_0, K_1)$ code yields the best frame error rate (FER) performance.

\begin{figure}[t]
	\centering
	\includegraphics[scale=0.32]{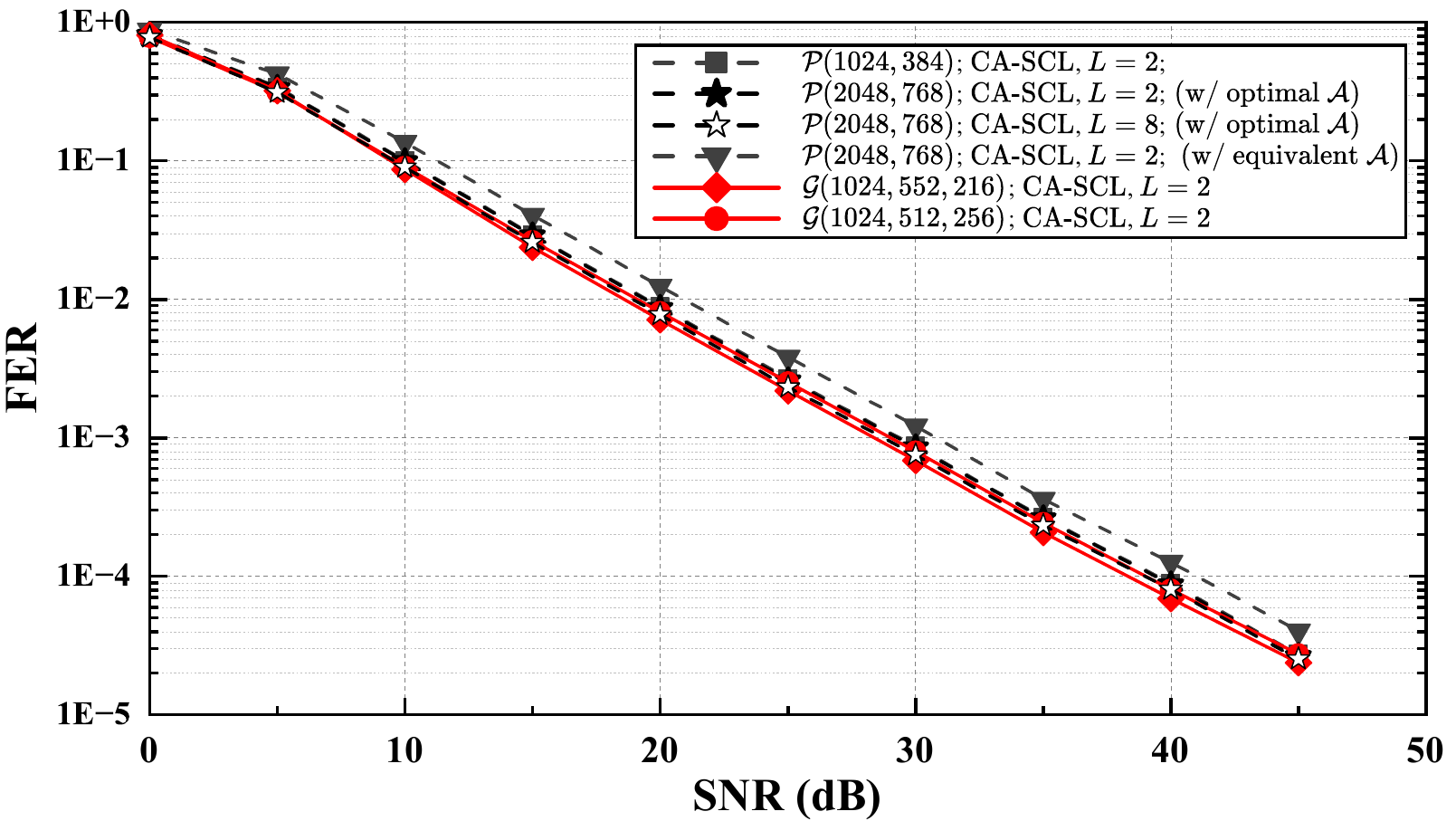}
	\caption{
		Decoding performance of the GII-polar codes.
	}
	\label{Fig.6-1}
\end{figure}

Fig. \ref{Fig.6-1} shows FER performance of the ${\mathcal G}\left( {1024, 552, 216} \right)$ and the ${\mathcal G}\left( {1024, 512, 256} \right)$ GII-polar codes.
Its SCL decoding performance is compared with those of the ${\mathcal P}\left(2048, 768\right)$ and ${\mathcal P}\left(1024, 384\right)$ codes.
All codes have a similar rate of 0.375.
For the ${\mathcal P}\left( {2048, 768} \right)$ code, constructions are conducted either through GA profiling in \cite{GA-Peter2012}, or the equivalently rate profiling, i.e., $\mathcal{A}$ is obtained as in (\ref{eq3-1-4}).
It can be seen that the GII-polar codes, e.g., ${\mathcal G}\left( {1024, 552, 216} \right)$, with the proposed parameter design, as in (\ref{eq6-1}), can achieve a better performance than the codes constructed with random parameter configurations, e.g., ${\mathcal G}\left( {1024, 512, 256} \right)$.
Therefore, the following analyses will focus on the ${\mathcal G}\left( {1024, 552, 216} \right)$ code.
It can be seen that the proposed GII-polar code outperforms the ${\mathcal P}\left( {2048, 768} \right)$ code under both rate profilings with the same SCL decoding list size.
The ${\mathcal G}\left( {1024, 552, 216} \right)$ code also outperforms the ${\mathcal P}\left( {1024, 384} \right)$ code.
Note that the rate profiling in (\ref{eq3-1-4}) deviates from the optimal construction for this length.
Moreover, the ${\mathcal G}\left( {1024, 552, 216} \right)$ code provides a significant decoding latency advantage over this polar code, as will be discussed below.
This performance gain is attributed to the nested structure of GII-polar codes.
Since a GII-polar code can be seen as a polar code with the same codeword length,
this performance gain is attributed to diversity of the decoding paths of the aforesaid polar codes.
If decoding of an interleave fails, it can be projected into decoding of the nested code, enabling richer error patterns to be corrected by each interleave.
Moreover, this nested structure also enables the decoder to adapt to different realizations of channel through decoding shorter polar codes with matched code rate.
Consequently, the decoding performance of GII-polar codes can be further improved.

\subsection{Decoding Latency and Complexity}

Table \ref{table:latency} shows the decoding latencies for the ${\mathcal G}\left( {1024, 552, 216} \right)$, the ${\mathcal P}\left( {2048, 768} \right)$, and the ${\mathcal P}\left( {1024, 384} \right)$ codes.
It can be observed that the proposed GII-polar code achieves nearly half the decoding latency of a single polar code with length $2N$, while provides superior FER performance.
In particular, compared with the ${\mathcal P}\left(1024, 384\right)$ code decoded by the SCL decoding, the proposed GII-polar code attains a comparable decoding latency but achieves better FER performance. 

\begin{figure}[t]
	\centering
	\includegraphics[scale=0.32]{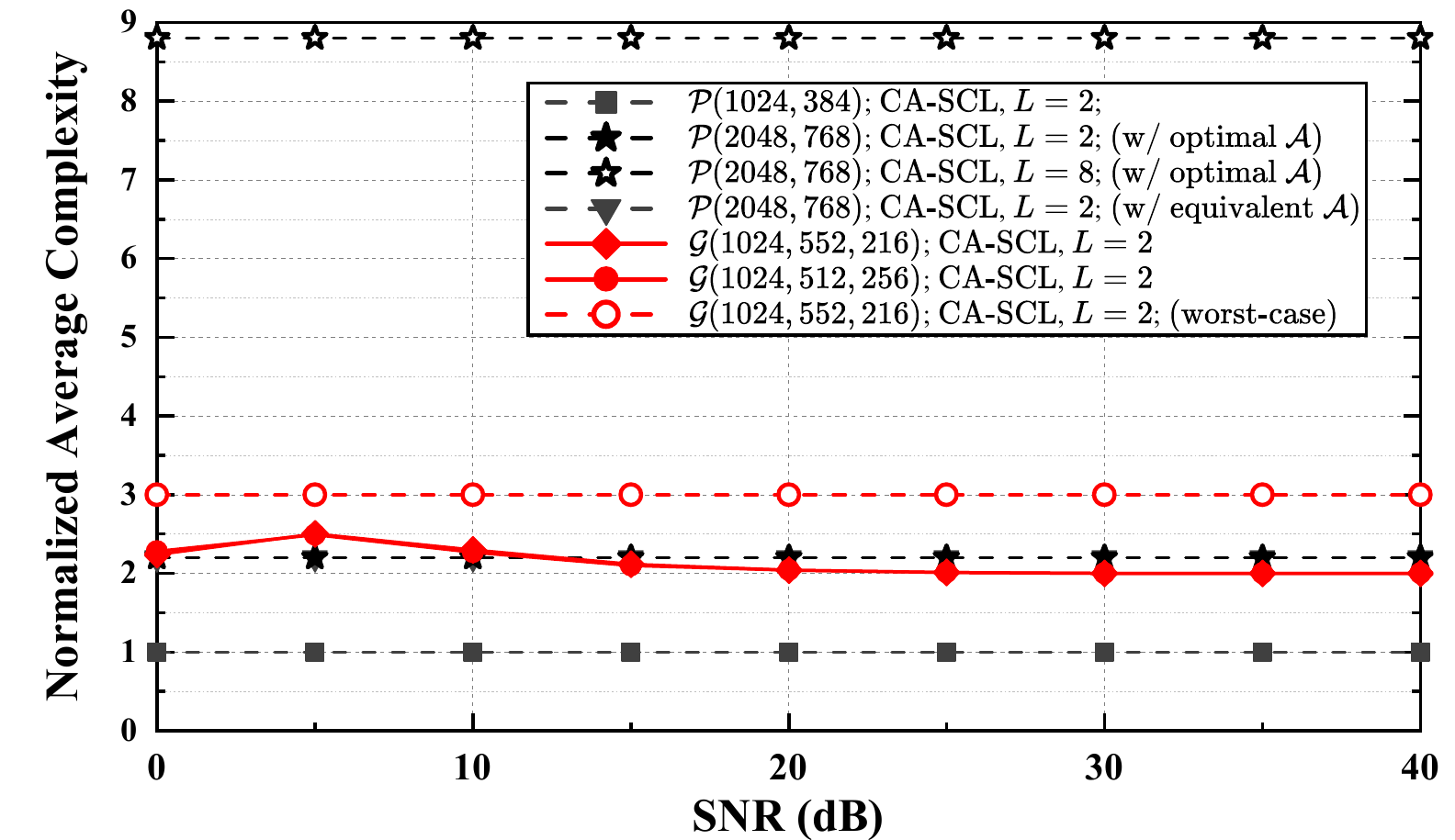}
	\caption{
		Normalized complexity in decoding the GII-polar codes.
	}
	\label{Fig.6-2}
\end{figure}

\begin{table}[t]
	\centering
	\caption{Average Latency in Decoding the GII-Polar Codes}
	\label{table:latency}
	\footnotesize
	\begin{tabular}{|l|l|c|c|c|}
		\hline
		\multirow{2}{*}{Codes} & \multirow{2}{*}{SCL Decoder} & \multicolumn{3}{c|}{SNR (dB)} \\
		\cline{3-5}
		& & 5 & 15 & 25 \\
		\hline

		\multirow{1}{*}{$\mathcal{P}(1024,384)$}

		& $L =2 $
		& 2,454 & 2,454 & 2,454 \\ \cline{2-5}
		\hline

		\multirow{1}{*}{$\mathcal{P}(1024,552)$}

		& $L =2 $
		& 2,622 & 2,622 & 2,622 \\ \cline{2-5}
		\hline

		\multirow{1}{*}{$\mathcal{P}(2048,768)$}
		& $L \in \{2, 8\} $   
		& 4,886 & 4,886 & 4,886 \\ \cline{2-5}
		\hline

		\multirow{1}{*}{$\mathcal{G}(1024,552,216)$}
		& $L =2 $
		& \textbf{3,763} & \textbf{2,888} & \textbf{2,626} \\ \cline{2-5}

		\hline


		\hline
	\end{tabular}
\end{table}

Fig. \ref{Fig.6-2} shows the normalized average complexity in decoding the ${\mathcal G}\left( {1024, 552, 216} \right)$ GII-polar code, where the normalization factor is $2N{\log _2} {N} $ and $N=1024$.
With SCL decoding, the ${\mathcal G}\left(1024, 552, 216\right)$ code exhibits a decoding complexity that is comparable to that of the ${\mathcal P}\left(2048, 768\right)$ code.
However, the proposed GII-polar code achieves superior decoding performance and significantly lower decoding latency.
Finally, it should be pointed out that, under SCL decoding, decoding complexity of the ${\mathcal G}\left(1024, 552, 216\right)$ code is higher than that of the ${\mathcal P}\left(1024, 384\right)$ code. 

\section{Conclusion}

In this paper, the GII-polar code has been proposed.
Simulation results have shown that it outperforms the conventional polar codes over the BRFC.
In particular, since decoding of all interleaves can be performed in parallel, a GII-polar code, with two consecutive polar codewords of length $N$, has yielded a reduced decoding latency over a single polar code of length $2N$.
Future work will focus on the performance analysis and the hardware design of GII-polar codes.

\section*{Acknowledgement}
This work is supported in part by the National Natural Science Foundation of China (NSFC) with project ID 62471503; and in part by the Natural Science Foundation of Guangdong Province (NSFGP) with project ID 2024A1515010213.

The authors thank {Prof.} Emanuele Viterbo (from the Department of Electrical and Computer Systems Engineering, Monash University, Melbourne, Australia) and {Prof.} Erdal Ar{\i}kan (from the Department of Electrical-Electronics Engineering, Bilkent University, Ankara, Turkey) for the fruitful discussions at ISIT 2026 that inspired this work.


\bibliographystyle{IEEEtran}      
\bibliography{IEEEabrv,ref}

\end{document}